\newcommand*\Sdelete[1]{\unskip}
\newcommand*\Sadd[1]{#1}
\newcommand{\secret}[1]{\ensuremath{\lsem #1 \rsem}}
\DeclarePairedDelimiter\floor{\lfloor}{\rfloor}
\newcommand{\ie}{i\/.\/e\/.,\/~}%
\newcommand{\eg}{e\/.\/g\/.,\/~}%
\newcommand{\cf}{cf\/.\/~}%
\newcommand{\T}{^{\mathsf{T}}}
\newtheorem{thm}{Theorem}
\newtheorem{cor}{Corollary}
\newtheorem{rem}{Remark}
\let\oldnl\nl%
\newcommand{\nonl}{\renewcommand{\nl}{\let\nl\oldnl}}%
\title{\LARGE \bf
Multi-party computation enables secure polynomial control \newline based solely on secret-sharing
}
\author{Sebastian Schlor, Michael Hertneck, Stefan Wildhagen and Frank Allg{\"o}wer*%
\thanks{*The authors are with the University of Stuttgart, Institute for Systems Theory and Automatic Control, Germany. Funded by Deutsche Forschungsgemeinschaft (DFG, German Research Foundation) under Germany's Excellence Strategy - EXC 2075 - 390740016 and under grant AL 316/13-2 - 285825138. We acknowledge the support by the Stuttgart Center for Simulation Science (SimTech). \{schlor, hertneck, wildhagen, allgower\}@ist.uni-stuttgart.de.}%
}
\begin{document}

\pubid{\begin{minipage}{\textwidth}\ \\[12pt] \copyright 2021 IEEE. Personal use of this material is permitted. Permission from IEEE must be obtained for all other uses, in any current or future media, including reprinting/republishing this material for advertising or promotional purposes, creating new collective works, for resale or redistribution to servers or lists, or reuse of any copyrighted component of this work in other works.\end{minipage}} 

\maketitle
\begin{abstract}

Encrypted control systems allow to evaluate feedback laws on external servers without revealing private information about state and input data, the control law, or the plant.
While there are a number of encrypted control schemes available for linear feedback laws, only few results exist for the evaluation of more general control laws.
Recently, an approach to encrypted polynomial control was presented, relying on two-party secret sharing and an inter-server communication protocol using homomorphic encryption. 
As homomorphic encryptions are much more computationally demanding than secret sharing, they make up for a tremendous amount of the overall computational demand of this scheme.
For this reason, in this paper, we demonstrate that multi-party computation enables secure polynomial control based solely on secret sharing.
We introduce a novel secure three-party control scheme based on three-party computation.
Further, we propose a novel $n$-party control scheme to securely evaluate polynomial feedback laws of arbitrary degree without inter-server communication.
The latter property makes it easier to realize the necessary requirement regarding non-collusion of the servers, with which perfect security can be guaranteed.
Simulations suggest that the presented control schemes are many times less computationally demanding than the two-party scheme mentioned above.

\end{abstract}

\section{INTRODUCTION}

Distributed computing is a rapidly emerging technology both in research and industrial application.
It refers to the concept that multiple computing parties achieve a common goal by interacting via a communication network.
The concept of distributed computing includes also the idea of \emph{Control as a Service}.
Here, as illustrated in Figure~\ref{fig:Network}, control loops are closed via external servers which provide abundant computational power for the evaluation of the control law.
Besides this advantage, there are concerns about data privacy.
Since sensor measurements, the specific controller design, and control actions might be sensitive data, the service providers should not receive any valuable information thereof, as they might not be trustworthy. 
For this reason, secure control schemes have been introduced to guarantee privacy in Control as a Service.
These schemes are typically based on either homomorphic encryption or multi-party computation using secret sharing.
Homomorphic encryption often suffers from high computational costs, while secret sharing requires less computational power but usually more communication.
While previous works on secret sharing-based control consider two non-colluding servers to secretly compute control inputs~\cite{Darup2019, Darup2020}, we introduce in this paper more efficient three-party and $n$-party computation schemes based on secret sharing and show their advantages compared to existing schemes.

\begin{figure}[tb]
	\centering
	\includestandalone[width=0.9\columnwidth, mode=buildnew]{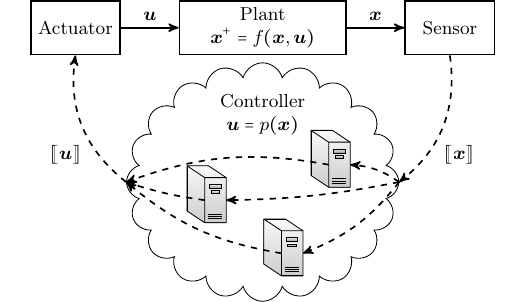}
	\caption{External servers are used in a form of distributed computing to securely control the plant. Thereby, private data is protected by secret sharing.}
	\label{fig:Network}
\end{figure}

Most secure control schemes in the literature are based on homomorphic encryption.
\Sadd{These encryption schemes support a certain subset of mathematical operations to be carried out on encrypted data.}
This means that encryption and decryption preserve the result of these operations on the original data. %
The supported operations can\remove{then} be used to evaluate a control law in secure control. 
\Sadd{Prominent examples for such schemes are the \emph{additively} homomorphic Paillier cryptosystem~\cite{Paillier1999} and the \emph{multiplicatively} homomorphic ElGamal cryptosystem~\cite{Elgamal1985}.}
In principle, \emph{fully} homomorphic encryption schemes such as~\cite{Gentry2010} allow for the computation of any function on finite encrypted data.
However, the computational complexity of these schemes hinders their application to real-world control problems with limited computational power and requirements on the response time.
\Sadd{Therefore, many results are based on the Paillier~\cite{Farokhi2017, Darup2018, Kishida2018, Lin2018, Darup2019d, Darup2019b, Murguia2020} and the ElGamal~\cite{Kogiso2015} cryptosystem.}
Still, with these proposed schemes, multiplications, exponentiations, and modulo operations on large numbers are required to encrypt, decrypt, and process the data.
This results in a large computational demand and long response times.

\pubidadjcol 

A promising approach to overcome this problem is secure multi-party computation using secret sharing (\cf \cite{BenOr1988, Cramer2000}), \Sadd{as}
secret sharing is much less computationally demanding than homomorphic encryption.
Secret sharing protocols split secret data into multiple shares such that individual shares do not reveal anything about the secret. To compute functions on these shared values, the shares are distributed to multiple non-colluding parties which can perform local computations and can communicate with each other.
A two-party computation scheme was used in~\cite{Darup2019} for linear control and was later extended to admit polynomial control laws in~\cite{Darup2020}.
In these two-party computation schemes, additions are easy to evaluate, but products of secret values require computations on secret shares from different parties which are not allowed to be combined.
To be able to compute products in this setup nonetheless,~\cite{Darup2020} proposed to use a joint protocol involving inter-server communication and again homomorphic encryption.
However, this remedy is responsible for 99\% of the computational demand of the overall scheme~\cite{Darup2020}.

Recently, a three-party protocol has been introduced~\cite{Araki2016, Mohassel2018} with which, in contrast to two-party computation, products can be evaluated without having to resort to inter-server communication.
These methods have already been applied to authentication protocols~\cite{Araki2016} and machine learning~\cite{Mohassel2018, Patra2020}, but have not been explored for secure control so far.

In this paper, we investigate secret sharing-based multi-party computation schemes for secure polynomial control.
To this end, we propose two schemes:
\begin{itemize}[leftmargin=5mm]
	\item We introduce a secure polynomial control scheme based on three-party computation.
			For polynomials of up to degree two, no inter-server communication is needed, while higher-order polynomials can be evaluated using additional communications between the servers. In contrast to the two-party scheme~\cite{Darup2020}, symmetric encryptions, which are significantly less computationally demanding than homomorphic encryptions, can be used to this end.
	\item We present a novel $n$-party computation scheme, with which polynomials of arbitrary degree can be evaluated without inter-server communication, and we leverage this scheme for secure polynomial control.
\end{itemize}
With the proposed schemes, state measurements, the control inputs, as well as the controller parameters are kept secret.
Under semi-honest adversaries, the three-party scheme is perfectly or semantically secure, depending on the exact chosen communication structure, and the $n$-party scheme is perfectly secure.
For the $n$-party protocol, the assumptions to guarantee perfect security are less restrictive than for the two- and three-party schemes, since no inter-server communication is needed.
The proposed\remove{three- and $n$-party} control schemes are the first secure polynomial control algorithms that do not \change{require}{need} homomorphic encryption.
As a result, both schemes are less computationally demanding than the scheme in~\cite{Darup2020} by several orders of magnitude, as we demonstrate in a numerical example.

The remainder of this paper is structured as follows.
In Section~\ref{sec:Problem}, the problem under consideration is specified.
After the basics of secret sharing are introduced in Section~\ref{sec:BackPrivacy}, we present how to compute products of $n$ numbers using the three- and $n$-party schemes in Section~\ref{sec:MPC}.
\Sadd{Due to spatial limitations, the proofs of the statements are omitted here.}
In Section~\ref{sec:PolControl}, we demonstrate how the presented schemes are used to realize secure polynomial control.
We illustrate the proposed control schemes with a numerical example in Section~\ref{sec:example}.
Finally, our work is concluded in Section~\ref{sec:conclusion}.

\section{PROBLEM SETUP}\label{sec:Problem}

In this work, we consider a discrete-time system
\begin{align*}
	\bm{x}^{+} = f(\bm{x},\bm{u}),%
\end{align*} %
with the state $\bm{x}\in\mathbb{R}^{n_x}$ and the control input $\bm{u}\in\mathbb{R}^{n_u}$. By $ \bm{x}^+ $, we denote the state at the next time instant.
The system is controlled by a polynomial controller
$\bm{u} =  p(\bm{x})$,
where $p(\cdot)$ is a given real polynomial of degree $d$.
By $x_i$, we denote the $i-$th element of $\bm{x}$ in the following.
The polynomial control law $\bm{u}=p(\bm{x})$ is written as
\Sadd{$\bm{u} = \bm{A} \bm{X}$,}
where $\bm{X}$ is the vector of all (or selected) monomials of $\bm{x}$ up to the finite order $d$, \ie
\begin{align*}
	\bm{X} = \begin{bmatrix}
		1&		x_1&\cdots&x_{n_x}&x_1^2&x_1 x_2&\cdots&x_{n_x}^d
	\end{bmatrix}\T.
\end{align*}
The real matrix $\bm{A}$ is the matrix of coefficients of appropriate size.
Subsequently, for simplicity and without loss of generality for the processing procedure, we consider the scalar input case ${n_u}=1$.
Thus, the control input $u$ is computed as the sum of products with up to $d+1$ factors, considering also the coefficients in $\bm{A}$.
This is what a control device should compute in every time step to provide a control input to the plant.
We consider a configuration where the controller is outsourced to external devices such as cloud computing servers, as depicted in Figure~\ref{fig:Network}.\Sdelete{, sensors, controller and actuators are linked via a communication network.}
The control law should be evaluated on these external servers.
However, we want to hide the control law, the measurements and the resulting control actions, since this data could be confidential.
We consider external servers that honestly execute the protocol but which are curious about the user's data and the control law, which have been referred to as \emph{semi-honest} or \emph{honest but curious}.
Thus, in the sequel of this paper, we present a procedure for the secret evaluation of the polynomial control law $p(\bm{x})$.

\section{SECRET SHARING AND QUANTIZATION}\label{sec:BackPrivacy}

The theory of privacy-preserving computation requires some\remove{fundamental} basics, which are summarized in this section.

\subsection{Secret sharing}\label{sec:SS}
Secret sharing is a method to securely store and process data.
For this purpose, secret data is divided into several shares and the shares are distributed among multiple parties.
In a $(t,n)$ secret sharing scheme, where $2\le t \le n$, the data is split into $n$ shares such that knowledge of at most $t-1$ shares does not reveal anything about the secret, but by combining at least $t$ shares, the secret data can be reconstructed~\cite{Evans2018}.
An overview over several secret sharing variants is given in~\cite{Beimel2011}.
Here, we focus on simple additive secret sharing.%

This work mainly builds on $(2,n)$ secret sharing schemes.
A procedure to create secret shares is given in Algorithm~\ref{algo:ss}.
To encrypt a secret $s$ from a fixed finite number range $\mathbb{N}_{Q} \coloneqq \{0,\dots,Q-1\}$, in this additive secret sharing protocol based on~\cite{Araki2016, Mohassel2018}, $n$ random numbers $s_j \in \mathbb{N}_{Q}$, referred to as components, are drawn uniformly at random with the constraint that $\sum_{j=1}^n s_j \mod Q = s$.
This is achieved by the first and second line of code in Algorithm~\ref{algo:ss}.
Then, the share of Party $p_j$ consists of the tuple $\overline{(s_j)}  \coloneqq (s_\ell|\ell \neq j)$. 
The respective shares are secretly transferred to the $n$ participating parties.
Therefore, each party holds all components of the secret except for the component with the same index as the party.
This yields a $(2,n)$ secret sharing scheme which will be used for further computations.
We see that by combining the shares of at least two parties, the secret can be fully reconstructed, since the combined shares contain all components $s_j, j\in \{1,\dots,n\}$.
However, each individual share does not reveal any information, since its components were generated uniformly at random and are independent of the secret.
Thus, given only one share, each possible secret is equally likely and each share is indistinguishable from another randomly generated tuple.
Therefore, this kind of secret sharing is perfectly secure~\cite{Araki2016}.
By $\secret{s} = (s_1, s_2,\dots,s_n )$ we denote the tuple of components $s_j$ of the secret sharing of $s$.
Whenever we increment indices from a set $ \{1,...,n\} $ in the following, the successor of $n$ is $1$.
Hereafter, when incrementing an index $j$ from a set $ \{1,...,n\} $, we define the successor of $j=n$ to be $j+1= 1$.%

\begin{algorithm}[tb]
	\DontPrintSemicolon %
	\KwIn{Secret $s\in \mathbb{N}_Q$, number of shares $n$}
	\KwOut{Shares $\overline{(s_j)}, j\in \{1,\dots,n\}$ }
		Draw $s_1,s_2 \dots, s_{n-1} \in \mathbb{N}_Q$ uniformly at random\;
		$s_{n} \gets s-\sum_{j=1}^{n-1} s_j \mod Q$\;
		Partition into shares $\overline{(s_j)}  \gets (s_\ell|\ell \neq j)$\;
	\Return{$\overline{(s_j)}, j\in \{1,\dots,n\}$}\;
	\caption{$(2,n)$ secret sharing}
	\label{algo:ss}
\end{algorithm}

\subsection{Quantization and fixed-point number representation}\label{sec:quant}

Most encryption and secret sharing schemes consider numbers from a finite set of integers $\mathbb{N}_{Q}$ as plaintext.
For homomorphic encryption, $Q\in \mathbb{N}$ needs to be \emph{large} to ensure the encryption's security.
For secret sharing, the security is independent of the cardinality of the plaintext space.
The size of $Q$ only depends on the range of numbers to be represented.
\Sadd{For computations on $\mathbb{N}_{Q}$, the following holds:}
\begin{itemize}[leftmargin=5mm]
	\item If $a \{+,-,\cdot\} b \in \mathbb{N}_{Q}$, then  $a \{+,-,\cdot\} b \mod Q = a \{+,-,\cdot\} b$.
	\item If $a \{+,-,\cdot\} b \in \mathbb{Z}$, then $a \{+,-,\cdot\} b \mod Q = (a\mod Q) \{+,-,\cdot\} (b\mod Q) \mod Q$.
\end{itemize}
Therefore, for any polynomial $p$ with integer coefficients, it holds true that $p(a) = p(a\mod Q) \mod Q$ if $p(a), a \in \mathbb{N}_{Q}$.

\Sadd{When dealing with real-valued numbers, they need to be quantized and transformed into an integer representation before secret-sharing.}
First, a measured number $x_r\in \mathbb{R}$ is truncated or rounded to the corresponding value $x$ in a fixed-point representation.
This can be done by a quantizer as
\begin{align}
	x = q_\Delta(x_r) = \begin{cases}
		q_{sat} -\Delta & \text{if } x_r\ge q_{sat} - \Delta\\
		\floor*{\frac{x_r}{\Delta} +\frac{1}{2}} & \text{otherwise}\\
		-q_{sat} & \text{if } x_r \le -q_{sat}
	\end{cases},
\label{eq:quant}
\end{align}
where $q_{sat}\in \mathbb{R}$ is the value of saturation and $\Delta\in \mathbb{R}$ is the precision.
By $\floor*{\cdot}$ we denote the floor function.

If we choose $q_{sat} = \frac{1}{2} \beta^{x_{pre}}$ and $\Delta = \beta^{-x_{post}}$ for a basis $\beta\in \mathbb{N}, \beta>1$, then the resulting number $x$ has at most $x_{pre}\in \mathbb{N}$ digits before the decimal marker and a precision of $x_{post}\in \mathbb{N}$ digits after the decimal marker. %
Now, by multiplying this number $x$ by $\beta^{x_{post}}$, we obtain an integer $\hat{x} \coloneqq x \beta^{x_{post}}$, which can be used for further encryption, sharing or computation.

All such numbers are contained in the set $ \mathbb{N}_{Q}$ with $Q \coloneqq \beta^{x_{pre} + x_{post}}$.
Negative numbers can be represented by their radix complement.
To keep track of the correct scaling factor $\beta^{x_{post}}$, \change{the scaling}{it} must be considered during calculation.
For numbers $a,b,c$ in this fixed-point representation with their respective numbers of fractional digits, the following rules apply:
\begin{itemize}[leftmargin=5mm]
	\item If $a\cdot b = c$, then $a\beta^{a_{post}} \cdot b \beta^{b_{post}} = c \beta^{a_{post} + b_{post}} \mod Q$.
	\item If $a\pm b = c$ and $c_{post} = \max\{a_{post},b_{post}\}$, then $a\beta^{c_{post} - a_{post}} \pm b \beta^{c_{post} - b_{post}} = c \beta^{c_{post}} \mod Q$.
\end{itemize}
With each multiplication, the scaling factor increases, but the modulus $Q$ stays constant.
Therefore, the degree of the polynomial to be evaluated must be taken into account when selecting the number format.
In particular, if the result of the polynomial should have $u_{pre}$ digits before the decimal marker, the indeterminate %
of the polynomial should have $x_{post}$ digits after the decimal marker, and the degree of the polynomial is $d$,
then choosing $Q = \beta^{u_{pre} + (d+1)x_{post}}$, $\Delta = \beta^{-x_{post}}$ and $q_{sat} = \frac{Q}{2}\Delta$ is sufficient.
Hereafter, we implicitly use modular arithmetic whenever handling integers of the set $\mathbb{N}_Q$.
\section{MULTI-PARTY COMPUTATION OF POLYNOMIALS}\label{sec:MPC}
\Sadd{To securely evaluate polynomials, the multi-party computation framework must provide basic primitives for addition and multiplication.
For two secrets $\secret{v_1} = (v_{1,1}, v_{1,2},\dots,v_{1,n})$, $\secret{v_2} = (v_{2,1}, v_{2,2},\dots,v_{2,n})$, and a public constant $c$,
addition of $\secret{v_1}$ and a public constant $c$ can be performed by adding $c$ to one of the random components as $(z_1,\dots,z_n) = (v_{1,1}+c, v_{1,2},\dots,v_{1,n})$.
Addition of two secrets $\secret{v_1}$ and $\secret{v_2}$ is \change{carried out}{done} by adding all components individually as $(z_1,\dots,z_n) = (v_{1,1}+ v_{2,1}, v_{1,2}+ v_{2,2},\dots,v_{1,n}+ v_{2,n})$.
Multiplication of $\secret{v_1}$ and a public constant $c$ is easily performed by multiplying each component by the constant as $(z_1,\dots,z_n) = (v_{1,1} c, v_{1,2} c,\dots,v_{1,n} c)$.
\add{As reported in}~\cite{Darup2019}, with these functionalities, affine functions such as matrix multiplications and affine control laws can be evaluated if one of the factors is publicly known.}
To securely evaluate higher-order products while ensuring the secrecy of all factors, multiplication of multiple secrets is also necessary.
While the previous primitives are similar for all secret sharing schemes of this additive type, they differ when multiplying secret numbers.
As mentioned in Section~\ref{sec:RelWork}, existing two-party multiplication protocols as in~\cite{Darup2020} require the secret exchange of shares between parties using homomorphic encryption.
In this section, first, we present how to multiply $n$ factors securely using three-party computation, and secondly, we propose a novel $n$-party scheme for this purpose.

\subsection{Three-party computation}\label{sec:three}
Secure three-party computation can be carried out as in the framework of~\cite{Araki2016, Mohassel2018}.
There, three non-colluding parties and a $(2,3)$ secret sharing scheme are used to multiply secret numbers.
This enables to multiply two factors locally without communicating shares between the servers.
Multiplication of $\secret{v_1} = (v_{1,1}, v_{1,2},v_{1,3})$ and $\secret{v_2} = (v_{2,1}, v_{2,2},v_{2,3})$\remove{ in a $(2,3)$ sharing} reads
\begin{multline*}
		\secret{v_1}\secret{v_2} = \underbrace{(v_{1,1}v_{2,1} + v_{1,1}v_{2,2} + v_{1,2}v_{2,1})}_{\eqqcolon z_3}\\
	    +  \underbrace{(v_{1,2}v_{2,2} + v_{1,2}v_{2,3} + v_{1,3}v_{2,2})}_{\eqqcolon z_1}
		+ \underbrace{(v_{1,3}v_{2,3} + v_{1,3}v_{2,1} + v_{1,1}v_{2,3})}_{\eqqcolon z_2}.
\end{multline*}
As described in Subsection~\ref{sec:SS}, in the $(2,3)$ sharing, each party obtains two of the three \Sdelete{random} components of each secret.
As a consequence, Party $p_3$ has access to the shares $(v_{1,1}, v_{1,2})$ and $(v_{2,1}, v_{2,2})$, which are necessary for calculating $z_3$.
Similarly, parties $p_1$ and $p_2$ can locally compute $z_1$ and $z_2$, respectively.
As a result, each party holds one component of the product $(z_1, z_2, z_3) \eqqcolon \secret{z} = \secret{v_1} \secret{v_2}$ in a $(3,3)$ sharing.
If this result should serve as a factor for another multiplication, again a $(2,3)$ sharing of it has to be created.
To do this, a zero-sharing among the servers, \ie random numbers $a_1,a_2,a_3\in \mathbb{N}_Q$ which fulfill  $a_1 +a_2 +a_3=0$, is created. As described in~\cite[Section 2.2]{Araki2016}, this can be done semantically secure without communication, or perfectly secure with communication between the parties.
Each party holds only one of these numbers.
Then, the servers can circularly transmit their obfuscated component $\tilde{z}_j = z_j +a_j$ to the subsequent party. 
This secure transmission can be efficiently realized with a symmetric encryption. 
Thus, each party again holds two of the three components of $z$ as its share, \ie $(\tilde{z}_j,\tilde{z}_{j-1})$, resulting in a $(2,3)$ sharing.
The obfuscation of the shares with the random zero-sharing ensures that the shares of $z$ are again uncorrelated to the shares of $v_1$ and $v_2$, which is necessary for perfect security~\cite{Evans2018}.
Based on this concept, products of $n$ factors can be computed as in Algorithm~\ref{algo:threeProduct}.
From \cite{Araki2016}, we inherit the following result on security.
\begin{thm}[{\cf \cite[Theorem 3.7]{Araki2016}}]
	Under secure communication channels and non-colluding servers, the three-party multiplication protocol is perfectly secure if the zero-sharing is perfectly secure, and semantically secure if the zero-sharing is semantically secure.
\end{thm}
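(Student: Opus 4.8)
The plan is to argue security in the standard simulation-based (real/ideal) paradigm for semi-honest adversaries, reducing the security of the whole protocol to that of the single ingredient that carries fresh randomness across the network, namely the zero-sharing. Since we assume non-colluding servers, the adversary corrupts at most one of the three parties, say party $p_j$. Its view consists of two parts: the initial $(2,3)$ shares of the input factors that it receives at setup, and every message that it sends or receives while Algorithm~\ref{algo:threeProduct} runs. The goal is to exhibit a simulator that, given only $p_j$'s output share, produces a view indistinguishable from the real one.

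First I would dispose of the initial shares. By the discussion in Subsection~\ref{sec:SS}, any two components of a $(2,3)$ sharing are jointly uniform on $\mathbb{N}_Q$ and independent of the secret, so the share held by $p_j$ can be simulated by drawing uniform elements of $\mathbb{N}_Q$, revealing nothing about the factors. Next, the only messages exchanged during the protocol are the obfuscated components $\tilde{z}_j = z_j + a_j$ circulated during each re-sharing step, and by the secure-channel assumption the adversary only observes those addressed to or originating from $p_j$. The crux is therefore to show that each such $\tilde{z}_j$ can be simulated without knowledge of $z_j$.

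I would establish this by a masking argument that splits into the two advertised cases. If the zero-sharing $(a_1,a_2,a_3)$ is perfectly secure, then from $p_j$'s viewpoint the relevant mask $a_j$ is uniform on $\mathbb{N}_Q$, hence $\tilde{z}_j = z_j + a_j$ is uniform and independent of $z_j$; the simulator simply outputs a uniform element, giving perfect security. If the zero-sharing is only semantically secure, the mask $a_j$ is merely computationally indistinguishable from uniform, so $\tilde{z}_j$ is computationally indistinguishable from uniform, and the simulator's output is indistinguishable from the real message, yielding semantic security. In both cases the distinguishing advantage against the multiplication is bounded by the distinguishing advantage against the zero-sharing, which is exactly the claimed reduction.

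Finally I would lift this single-multiplication argument to the product of $n$ factors computed in Algorithm~\ref{algo:threeProduct}. Because each of the $n-1$ sequential multiplications draws a fresh, independent zero-sharing, the execution is a sequential composition of the basic step, and I would invoke the composition theorem for semi-honest secure protocols to conclude that the composed protocol inherits the security of its components. The main obstacle I anticipate is precisely this composition step: I must verify the invariant that after each multiplication the circular transmission of the $\tilde{z}_j$ restores a genuine $(2,3)$ sharing whose components are distributed exactly as a fresh sharing of the intermediate product, so that the inductive hypothesis applies at the next round, and that the masks used across different multiplications are independent, so that no cross-round correlation leaks information. Once this invariant is in place, both the perfect and the semantic cases follow directly from the corresponding property of the zero-sharing, matching \cite[Theorem~3.7]{Araki2016}.
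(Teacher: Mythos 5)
The paper does not actually prove this theorem: it is imported verbatim by citation (``From \cite{Araki2016}, we inherit the following result on security'', \cf \cite[Theorem~3.7]{Araki2016}), so there is no in-paper argument to compare yours against. Your simulation-based sketch is, in outline, the correct reconstruction of the cited proof: simulate the corrupted party's input shares as uniform (any two components of the additive $(2,3)$ sharing are jointly uniform and independent of the secret), reduce the simulatability of the circulated messages to the security of the zero-sharing, and compose over the $n-2$ rounds of Algorithm~\ref{algo:threeProduct}. Two points deserve tightening. First, the message the corrupted party $p_j$ \emph{receives} is $\tilde{z}_{j-1}=z_{j-1}+a_{j-1}$, and since $a_1+a_2+a_3=0$ the masks are not independent; what you must argue is that $a_{j-1}$ is uniform (resp.\ pseudorandom) \emph{conditioned on} the component $a_j$ that $p_j$ itself holds --- your phrasing ``the relevant mask $a_j$ is uniform'' conflates the party's own mask with the one protecting the incoming message. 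Second, your composition invariant is stated more strongly than needed: after re-sharing, $p_j$'s pair $(\tilde{z}_j,\tilde{z}_{j-1})$ is not distributed as a fresh sharing from the party's own viewpoint ($\tilde{z}_j$ is determined by its prior state); the invariant that actually carries the induction is that the pair remains simulatable given the already-simulated view, with only $\tilde{z}_{j-1}$ contributing a fresh (pseudo)uniform value. With those corrections your argument matches the one in \cite{Araki2016}; neither issue is a fatal gap.
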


\begin{algorithm}[tb]
	\DontPrintSemicolon %
	\KwIn{Factors $v_1, v_2,\dots, v_{n}$}
	\KwOut{Product $z = v_1 v_2 \cdots v_{n}$}
	\hrule
	\nonl \textbf{At the distributor:}\;
	\For{$k \gets 1$ {\upshape to} $n$} {
		\parbox[t]{\dimexpr\columnwidth-\leftmargin-\labelsep-\labelwidth-15.5pt}{Use Algorithm~\ref{algo:ss} to create $(2,3)$ secret sharing $\secret{v_k} = (v_{k,1}, v_{k,2}, v_{k,3})$ with shares $\overline{(v_{k,j})}, j\in \{1,2,3\}$}\;
		Send each share $\overline{(v_{k,j})}$ to server $p_j$\;
	}
	\hrule
	\nonl\textbf{At each server $p_j$:}\;
	$(y_1,y_2) \gets \overline{(v_{1,j})}$\;
	\For{$k \gets 1$ {\upshape to} $n-2$} {
		$y_1\gets y_1 v_{k+1,j+1} + y_1 v_{k+1,j-1} + y_2 v_{k+1,j+1} $\;
		$a_j \gets$ component of new zero-sharing\;
		$y_1\gets y_1 + a_j$\;
		Send $y_1$ to server $p_{j+1}$\;
		$y_2\gets $ received $y_1$ from server $p_{j-1}$\;
	}
	$z_j\gets y_1 v_{n,j+1} + y_1 v_{n,j-1} + y_2 v_{n,j+1} $\;
	Send resulting share $z_j$ to collector\;
	\hrule
	\nonl\textbf{At the collector:}\;
	$z\gets \sum_{j=1}^{3} z_j$\;
	\Return{$z$}\;
	\caption{Secure $n$-factor product by three-party computation}
	\label{algo:threeProduct}
\end{algorithm}%

\subsection{$n$-party computation}\label{sec:nParty}

In this subsection, we present a novel approach to compute a product $\secret{z} = \secret{v_1}\secret{v_2}\cdots \secret{v_{n}}$ of $n$ factors simultaneously \Sdelete{by generalizing the existing three-party computation to $n$-party computation}.
Inspired by the fact that three parties can directly compute a product of two factors \Sdelete{(\cf Subsection~\ref{sec:three})}, we exploit that a product of $ n $ factors can be computed directly by $ n+1 $ parties.
After creating $(2,n+1)$ secret sharings of the factors at a distributor, \eg the sensor, the product of the shared factors can be written as
\begin{align}
		z ={} &{} \phantom{\cdot} (v_{1,1} + \dots + v_{1,n+1}) \cdots  (v_{n,1} + \dots + v_{n,n+1})\notag\\
	={} & \sum_{j_1=1}^{n+1} \sum_{j_2=1}^{n+1} \cdots \sum_{j_n=1}^{n+1} v_{1,j_1} v_{2,j_2} \cdots v_{n,j_n}  \eqqcolon \sum_{j=1}^{n+1} z_j. \label{eq:SumSum}
\end{align}
As there are $n+1$ parties which each hold $n$ components of each factor and there are always $n$ factors in a summand $ v_{1,j_1}\cdots v_{n,j_n}$ of~\eqref{eq:SumSum}, for every summand there is at least one party capable of computing it. 
The indices of summands that one party $ p_j $ should compute is captured by the subset $\mathcal{I}_j \subset \{1,..., n+1\}^n$, which is assigned before the online phase.
Typically, it is advantageous if the computations are evenly distributed among the parties.
\begin{rem}\label{rem:distribution}
One possible way to evenly distribute the summands to be computed is the following.
Fix the index $ j_{n} = 1 $. For every index combination $ (j_1,\dots, j_{n-1})\in \{1,\dots,n+1\}^{n-1}$, assign the summand $ v_{1,j_1}\cdots v_{n-1,j_{n-1}} v_{n,j_n}$ to one of the parties $ p_j $ with $j \in \{1,\dots,n+1\}\backslash \{j_1,\dots, j_n\}$. We add the indices of this summand to the index set $\mathcal{I}_j$.
Then, recursively for all $ \ell\in \{1,\dots,n\}$, the summand with indices $ \{j_1+\ell,\dots, j_n+\ell\}$ is assigned to party $p_{j+\ell}$.
\end{rem}
Each party $p_j$ sums its computed summands to its share $z_j$ of the result.
\Sdelete{By combining the shares of all parties, the result can be restored.}
Thus, an $(n+1,n+1)$ sharing of $z$ is created.
The shares are sent to a collector, \eg the actuator, where the sum of $z_j$ is calculated to restore the result $z$.
Our proposed $ n $-party multiplication scheme is formalized in Algorithm~\ref{algo:multiProduct}.
In contrast to sequential two- or three-party computation to compute higher-order products, by using this $ n $-party computation scheme, no communication except for distributing and collecting the shares is necessary.
This brings the benefit that a) communication delays due to inter-server communication are avoided and b) that computations can be carried out in parallel.
Since the individual parties do not exchange their shares, for this scheme, we can guarantee perfect security as defined in~\cite[Definition~3.1]{Araki2016}.
\begin{thm}
	Under secure communication channels and non-colluding servers, the $n+1$-party multiplication protocol is perfectly secure.%
\end{thm}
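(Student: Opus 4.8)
The plan is to use the simulation-based notion of perfect security in the semi-honest model invoked by the theorem (\cite[Definition~3.1]{Araki2016}): for every server $p_j$ I would exhibit a simulator producing a transcript whose distribution is \emph{identical} to the real view of $p_j$, using only what $p_j$ is entitled to learn, which here is nothing, since the servers receive no output. The structural feature I would exploit is that, apart from receiving its input shares from the distributor and sending its output share to the collector, server $p_j$ neither sends nor receives any message during the computation (\cf Algorithm~\ref{algo:multiProduct}). Hence the real view of $p_j$ consists solely of the input shares $\overline{(v_{k,j})} = (v_{k,\ell}\mid \ell\neq j)$ for $k=1,\dots,n$, together with deterministic functions of them.

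First I would define the simulator $\mathcal{S}_j$ to draw, for each factor $k$, a tuple of $n$ values uniformly at random from $\mathbb{N}_Q$ and output these as the simulated input shares, then run the honest local code on this transcript to obtain the simulated partial product $z_j$. The core claim is that the simulated input shares are distributed exactly as the real ones, which is precisely the $(2,n+1)$ perfect-security property established in Subsection~\ref{sec:SS}: each server holds only $n$ of the $n+1$ components of a factor, so it is always missing exactly one component; since that component was drawn uniformly and independently, every secret remains equally likely and a single share is uniform on $\mathbb{N}_Q^n$ and independent of the secret, matching the simulator.

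Because the sharings of the $n$ factors are produced by independent runs of Algorithm~\ref{algo:ss}, I would then argue that the \emph{joint} view of $p_j$ across all factors is a product of $n$ such uniform, secret-independent distributions and hence reveals nothing about $(v_1,\dots,v_n)$ or about $z$. Two remaining points close the argument: the index assignment $\mathcal{I}_j$ is fixed and public before the online phase, so it is data-independent and adds nothing to the view; and the outgoing share $z_j$ is a deterministic function of the already-simulated inputs, so a deterministic post-processing of a simulatable view is again simulatable. Since the servers do not collude, it suffices to simulate each $p_j$ in isolation, which the above provides.

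The hard part will be justifying that combining the shares of all $n$ factors with the locally computed $z_j$ leaks nothing beyond a single independent sharing: $z_j$ is a nontrivial polynomial in the components $p_j$ holds, and one must check that conditioning on it does not couple the factors so as to narrow the secrets. I expect this to reduce cleanly to the observation that $z_j$ is already determined by $p_j$'s inputs and is therefore redundant in the view, leaving the product-form distribution intact. Separately, I would verify correctness---that every summand $v_{1,j_1}\cdots v_{n,j_n}$ is held in full by at least one party (\cf Remark~\ref{rem:distribution})---but I would stress that this is independent of, and not needed for, the security argument.
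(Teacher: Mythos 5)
Your proposal is correct and follows essentially the same route as the paper's proof: the core argument in both is that each server's view consists only of its $n$ out of $n{+}1$ components per factor, which are uniformly distributed on $\mathbb{N}_Q^n$ and independent of the secrets, so that the locally computed $z_j$ (a deterministic function of that view) adds no information. You merely dress this up in the explicit simulator language of \cite[Definition~3.1]{Araki2016}, which is a slightly more formal packaging of the identical information-theoretic observation.
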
%
\begin{rem}
	The common assumption of non-colluding servers in secure multi-party computation can be rather restrictive.
	Its realization is especially difficult if the parties have to communicate while executing the protocol, since then, the user is forced to inform each party about the other parties. 
	However, in this $n$-party protocol, the individual servers do not need to interact with each other, and hence, there is no need to inform them about the other parties.
	Thus, even if servers wanted to collude, they would not be able to because they would not know which other parties to collude with.
	
\end{rem}

\begin{algorithm}[tb]
	\DontPrintSemicolon %
	\KwIn{Factors $v_1, v_2,\dots, v_{n}$}
	\KwOut{Product $z = v_1 v_2 \cdots v_{n}$}
	\hrule
	\nonl\textbf{At the distributor:}\;
	\For{$k \gets 1$ {\upshape to} $n$} {
		\parbox[t]{\dimexpr\columnwidth-\leftmargin-\labelsep-\labelwidth-15.5pt}{Use Algorithm~\ref{algo:ss} to create $(2,n+1)$ secret sharing $\secret{v_k} = (v_{k,1}, v_{k,2}, \dots, v_{k,n+1})$ with shares $\overline{(v_{k,j})}, j\in \{1,\dots,n+1\}$}\;
		Send each share $\overline{(v_{k,j})}$ to server $p_j$\;
	}
	\hrule
	\nonl\textbf{At each server $p_j$:}\;
	$z_j \gets 0$\;
	\For{{\upshape each} $(j_1,\dots, j_{n})$ {\upshape in} $\mathcal{I}_j$} {
		$z_j \gets z_j +  v_{j,j_1}\cdots v_{j,j_n}$\;
	}
	Send resulting share $z_j$ to collector\;
	\hrule
	\nonl\textbf{At the collector:}\;
	$z\gets \sum_{j=1}^{n+1} z_j$\;
	\Return{$z$}\;
	\caption{Secure $n$-factor product by $n+1$-party computation}
	\label{algo:multiProduct}
\end{algorithm}

\section{SECURE POLYNOMIAL CONTROL USING MULTI-PARTY COMPUTATION}\label{sec:PolControl}

Now that we have provided all the necessary primitives for the evaluation of polynomials, we show how polynomial control laws can be evaluated securely.
We present the procedure for three-party computation in parallel with the procedure for $n$-party computation, since they are very similar in most parts.
An overview over the participating units and the structure of the $n$-party computation is given in Figure~\ref{fig:Overview}.

As described in Section~\ref{sec:Problem}, we want to evaluate a polynomial control law $u_\mathrm{r}=p_\mathrm{r}(\bm{x}_\mathrm{r})$, which can be written as
$u_\mathrm{r} = \bm{A}_\mathrm{r} \bm{X}_\mathrm{r}$,
where $\bm{X}_\mathrm{r} = \begin{bmatrix}
		1&		x_{\mathrm{r}1}&\dots&x_{\mathrm{r}{n_x}}&x_{\mathrm{r}1}^2&x_{\mathrm{r}1} x_{\mathrm{r}2}&\dots&x_{\mathrm{r}{n_x}}^d
	\end{bmatrix}\T $
is the vector of monomials of the real vector $\bm{x}_\mathrm{r}$ up to order $d$, and $\bm{A}_\mathrm{r}$ is the vector of coefficients.
For illustration purposes, we demonstrate the necessary steps on the scalar polynomial control law $u_\mathrm{r} = a_{\mathrm{r}1}x_{\mathrm{r}1}^2 + a_{\mathrm{r}2}x_{\mathrm{r}1} + a_{\mathrm{r}3}x_{\mathrm{r}2} + a_{\mathrm{r}4}$.

First, we describe the preparation in the offline phase.
Since secret sharing operates on integers, we have to create an integer representation of all numbers, and hence, a quantization scheme as in~\eqref{eq:quant} is determined.
Then, the polynomial coefficients $ \bm{A}_\mathrm{r} $ are already quantized offline to obtain $ \bm{A} = q_\Delta(\bm{A}_\mathrm{r}) $.
In each time step, this quantizer will also be used to quantize the state as $ \bm{x} = q_\Delta(\bm{x}_\mathrm{r}) $.
Thus, the quantized polynomial $u = a_{1}x_{1}^2 + a_{2}x_{1} + a_{3}x_{2} + a_{4}$ will be evaluated.
Further, a distribution scheme is designed, which determines the assignment of secret shares and computations to the servers.
\Sdelete{For that purpose, each summand of the polynomial is treated separately.} This leads to share distributions as in Algorithm~\ref{algo:threeProduct} or~\ref{algo:multiProduct} for each summand in the control law.
In our example, the constant $ a_{4} $ is sent directly to the actuator. For two-factor products, such as the summands $ a_{2}x_{1}$ and $ a_{3}x_{2} $, in both the three- and $n$-party scheme, a $(2,3)$ sharing of each factor and three servers are needed.
For products of more than two factors, as $ a_{1}x_{1}^2 $, in the $n$-party scheme, a $(2,4)$ sharing of each factor is created and distributed to four servers as indicated in Figure~\ref{fig:Overview}.
In the three-party algorithm, again $(2,3)$ sharings of each factor are distributed to three servers.
\begin{rem}\label{rem:lessServers}
	The same server can be involved in \change{the computation of}{computing} different summands of the polynomial without gaining access to \change{sensible}{sensitive} information.
		Therefore, the actual number of servers needed in the $n$-party scheme is $d+2$ if $d$ is the degree of the polynomial, \change{since}{as} there are at most $n=d+1$ factors in one summand.
		In our example of degree $d=2$, four servers are needed, since the products $ a_2x_1 $ and $ a_3x_2 $ can be computed on the same servers as the highest-order product $a_1x_1^2$. 
		For the three-party scheme, always three servers are sufficient.
\end{rem}

After the quantization and distribution schemes are prepared, the online phase starts.
In every time step, the sensor measures the state $ \bm{x}_\mathrm{r} $ of the system, which is then quantized to $\bm{x}$ in a fixed-point or integer representation.
Then, for each summand of the polynomial, Algorithm~\ref{algo:threeProduct} or~\ref{algo:multiProduct} is executed. %
The servers send their shares $ z_j $ of the result to the collector via secure communication channels.
The collector receives the shares and computes the sum to reveal the secret result $u$.
Then, the quantized control input $u$ is applied to the plant.

Thus, \Sdelete{to summarize,} the polynomial control law is evaluated in its quantized form, and all multiplications are performed on the servers.
The sensor device needs to create random numbers from a finite set and compute modular sums and differences such that the \Sdelete{random} components add up to the secret.
To reveal the result, the actuator device only \Sdelete{needs to add} \Sadd{adds} the received shares.
Further, we can state the following corollary for the security of the polynomial control scheme, inherited from the security of the three- or $n$-party multiplication scheme.
\begin{cor}
	The polynomial control scheme has the same security properties as the multiplication scheme used therein.
\end{cor}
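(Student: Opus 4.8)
The plan is to treat the polynomial control scheme as a \emph{composition} of building blocks whose individual security has already been established, and to show that the composition preserves the security level of the multiplication subroutine. Evaluating $u = \bm{A}\bm{X}$ decomposes into three kinds of steps: (i) secret-sharing the quantized state components and coefficients at the distributor via Algorithm~\ref{algo:ss}; (ii) running one instance of the chosen multiplication protocol (Algorithm~\ref{algo:threeProduct} or~\ref{algo:multiProduct}) for each monomial summand; and (iii) the communication-free primitives of adding the public constant (here $a_4$) and summing the per-summand shares $z_j$ at the collector. Only step (ii) contains operations capable of leaking information---namely the inter-server communication in the three-party case---so it suffices to show that steps (i) and (iii) introduce no additional leakage, whence the scheme attains exactly the security level of the multiplication subroutine.

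First I would invoke the guarantees already in hand: the input sharing of step (i) is perfectly secure (Subsection~\ref{sec:SS}), and each multiplication instance in step (ii) is perfectly or semantically secure by the two preceding theorems, matching the mode of the chosen zero-sharing. For step (iii), adding the public constant and the per-component additions are purely local affine operations on shares, as described at the beginning of Section~\ref{sec:MPC}; they send uniform components to uniform components and hence reveal nothing to any server. The collector's final sum $\sum_j z_j$ reveals $u$ only to the actuator, which must learn it to drive the plant; no server ever holds more than its own share $z_j$, so the control action remains hidden from the servers.

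The main obstacle is the composition across summands when a single server participates in several products, as permitted by Remark~\ref{rem:lessServers}: such a server may see the same quantized component $x_i$ shared afresh in different summands, and one must certify that aggregating these views still leaks nothing. This follows because Algorithm~\ref{algo:ss} draws independent uniform randomness for every sharing, so each server's aggregate view is a collection of components that are marginally uniform on $\mathbb{N}_Q$ and mutually independent across sharings. Consequently one can exhibit a simulator that reproduces any single server's entire view from random coins alone, drawing the received components uniformly and, in the three-party case, each obfuscated value $\tilde{z}_{j-1}$ uniformly as well, since it is masked by a fresh zero-share. Under non-collusion no complete set of components of any one sharing is ever available to a single party, so the simulated and real views coincide (resp. are computationally indistinguishable), and the full scheme inherits perfect (resp. semantic) security from the multiplication subroutine.
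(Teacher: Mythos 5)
Your proof is correct, and it follows the same basic route the paper takes: the paper offers no proof environment at all for this corollary, justifying it only with the single sentence that security ``is inherited from security of the three- or $n$-party multiplication scheme.'' Your write-up makes that inheritance precise, and the one place where you add genuine content beyond the paper's one-liner is the cross-summand composition: when one server participates in several summands (as permitted by Remark~\ref{rem:lessServers}), you correctly observe that each invocation of Algorithm~\ref{algo:ss} uses fresh independent randomness, so the server's aggregate view across repeated sharings of the same quantized quantity is still jointly uniform and simulatable from random coins. That is exactly the step a careful reader would want spelled out, since it is the only point at which ``the multiplication protocol is secure'' does not immediately imply ``the whole control scheme is secure.'' Your handling of the remaining pieces---perfect security of the input sharing, the locality of public-constant addition and share summation, and the perfect-versus-semantic distinction tracking the zero-sharing mode in the three-party case---matches the paper's framework and introduces no gaps.
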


\begin{figure}[tb]
	\centering
	\includestandalone[width=0.95\columnwidth, mode=buildnew]{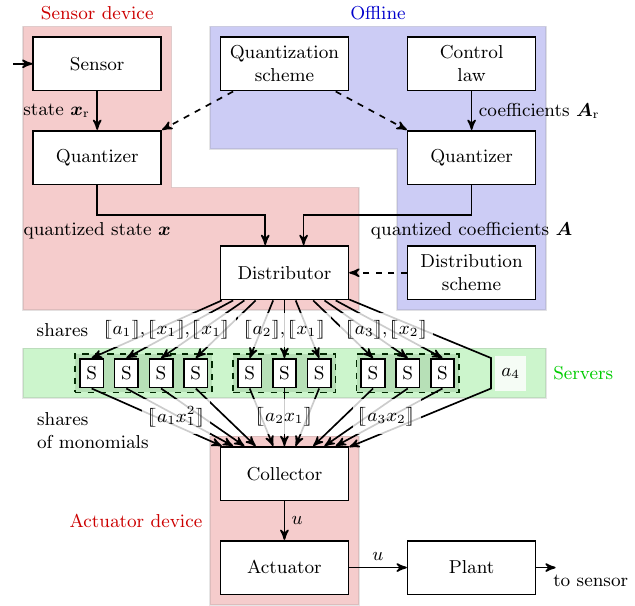}
	\caption{
		Structure of the evaluation of polynomial control laws using secure $n$-party computation.
		The distribution of computations to different servers is illustrated using the example of $u = a_{1}x_{1}^2 + a_{2}x_{1} + a_{3}x_{2} + a_{4}$.}
	\label{fig:Overview}
\end{figure}

\section{NUMERICAL EXAMPLE}\label{sec:example}

To illustrate the effectiveness of our approach, we simulate the multi-party computation protocols from Section~\ref{sec:PolControl} in a numerical example.
To be able to compare our schemes to state-of-the-art secure polynomial control protocols, we adopt the example from~\cite{Darup2020}.
We thus consider the\remove{ polynomial} system
\begin{align*}
	\dot{\bm{x}}=\frac{1}{T}\begin{bmatrix}
		-x_{1}+x_{1} x_{2}+x_{2} u \\
		x_{1}+2 x_{2}+x_{1}^{2}+x_{1}^{2}x_{2}+u
	\end{bmatrix}
\end{align*} %
with time scaling factor $T=1000$.
As described in~\cite{Darup2020}, this system is a scaled version of~\cite[Example 3]{Valmorbida2013} and a stabilizing control law derived therein is
\begin{align*}
	g(\bm{x})=&{} 1.6973 x_{1}-12.2838 x_{2}-0.2122 x_{1}^{2}-2.6975 x_{1} x_{2}\notag\\
	&{} +1.9631 x_{2}^{2} +0.7721 x_{1}^{3}-4.6034 x_{1}^{2} x_{2}\\
	&{} +0.2959 x_{1} x_{2}^{2}-2.3850 x_{2}^{3}.\notag
\end{align*}
This control law is to be evaluated by our multi-party computation frameworks and the input will be applied to the system in a zero-order hold fashion.
Therefore, first a quantization scheme is chosen.
For an easier interpretation, we choose $\beta=10$.
To make the complexity comparable to~\cite{Darup2020}, we \Sdelete{also} choose a precision of $x_{post} = 2$ fractional digits.
Further, as discussed in~\cite{Darup2020,Valmorbida2013}, the domain of attraction of the closed-loop system is constrained by $|x_1|<6, |x_2|<6$.
In this region, the control input is bounded by $|u|<2000$.
Thus, four digits before the decimal marker are sufficient in the result and we choose $u_{pre}= 4$.
We observe that the degree of the control law is $d = 3$.
In accordance with Section~\ref{sec:quant}, we choose $Q = 10^{4 + (3+1)2} = 10^{12}$, $\Delta = 10^{-2}$ and $q_{sat} = \frac{Q}{2}\Delta = \num{5e9}$.

To estimate the average time our algorithms require to compute the control input based on the quantized state, we decomposed them into elementary operations.
Then, we used the average computation times of these operations reported in~\cite[Table II]{Darup2019}, which were also used in~\cite{Darup2020}, to obtain the average computation times of our algorithms for this example.
The results are given in Table~\ref{tab:results}.
We observe that our three- and $n$-party protocols are about three orders of magnitude faster than the two-party scheme in~\cite{Darup2020}, where an average time of \SI{762.13e-3}{\second} was reported.
In particular, our schemes are about 6000 and 1000 times faster, which is mainly due to the fact that they do not need to employ homomorphic encryption for secure multiplication.
\begin{table}[tb] %
	\caption{Average computation times of the polynomial control schemes.}
	\label{tab:results}
	\centering
	\begin{tabular}{ l r r r r}%
		\toprule
		Scheme &  Sensor & Server& Actuator & Total \\
		\midrule
		Three-party [\SI{e-6}{\second}]  &  43.4 & 54.2 & 27.9 &  125.5\\
		 $n$-party [\SI{e-6}{\second}] &  67.6 & 622.9 & 40.7& 731.2\\
		\bottomrule
	\end{tabular}
\end{table}

\section{CONCLUSION AND OUTLOOK}\label{sec:conclusion}

\Sadd{In this work, we presented two multi-party computation protocols for secure evaluation of polynomial control laws on external servers. 
Both schemes can be evaluated much faster compared to previous results, as the presented methods do not rely on homomorphic encryption, but solely on secret sharing.
Our novel $n$-party computation protocol can evaluate polynomial control laws in one round without inter-server communication. 
Due to this property, the common assumption of non-colluding servers can be ensured more easily, since servers do not need to be informed about other parties.
We \change{proved that the two protocols are secure, and demonstrated their practicality}{demonstrated the practicality of the two protocols} in a numerical example.}

Further research includes extensions of the presented protocols.
\Sadd{Conversion between algebraic and binary shares as in \cite{Mohassel2018} and possible bit extraction enables comparisons of numbers, such that piecewise polynomial functions could be evaluated.}
This extends the scope of possible applications to more general control strategies such as model predictive control.
Regarding security, the three-party framework can also be extended to the case with one malicious cloud~\cite{Mohassel2018}.

\bibliographystyle{IEEEtran}
\bibliography{UsedInCDC2021}

\end{document}